\numberwithin{equation}{section}
\newtheorem{Theorem}{Theorem}[section]
\newtheorem{Lemma}[Theorem]{Lemma}
\newtheorem{Proposition}[Theorem]{Proposition}
 { \theoremstyle{definition}
\newtheorem{Remark}[Theorem]{Remark} }
\def \res#1{\mathop {\mathrm{res}}_{#1}}
\def \ds{\displaystyle}
\def \s{\sigma}
\def \ddz{ \frac {\d z}{2i\pi}}
\def \ddw{ \frac {\d w}{2i\pi}}
\def\wt{\widetilde}
\def\D{\mathbb D}
\def \G{\Gamma}
\def \pa{\partial}
\def\C{{\mathbb C}}
\def\N{{\mathbb N}}
\def\wh{\widehat}
\def\H{{\cal H}}
\def\Z{{\mathbb Z}}
\def\d{\mathrm d}
\def\1{{\bf 1}}
\def \t{\mathbf t}
\begin{document}

\allowdisplaybreaks

\newcommand{\arXivNumber}{1703.00046}

\renewcommand{\PaperNumber}{046}

\FirstPageHeading

\ShortArticleName{The Malgrange Form and Fredholm Determinants}

\ArticleName{The Malgrange Form and Fredholm Determinants}

\Author{Marco BERTOLA~$^{\dag\ddag}$}

\AuthorNameForHeading{M.~Bertola}

\Address{$^\dag$~Department of Mathematics and Statistics, Concordia University, Montr\'eal, Canada}
\EmailD{\href{mailto:marco.bertola@concordia.ca}{marco.bertola@concordia.ca}}

\Address{$^\ddag$~Area of Mathematics SISSA/ISAS, Trieste, Italy}
\EmailD{\href{mailto:marco.bertola@sissa.it}{marco.bertola@sissa.it}}

\ArticleDates{Received March 12, 2017, in f\/inal form June 17, 2017; Published online June 22, 2017}

\Abstract{We consider the factorization problem of matrix symbols relative to a closed contour, i.e., a Riemann--Hilbert problem, where the symbol depends analytically on parameters. We show how to def\/ine a function $\tau$ which is locally analytic on the space of deformations and that is expressed as a Fredholm determinant of an operator of ``integrable'' type in the sense of Its--Izergin--Korepin--Slavnov. The construction is not unique and the non-uniqueness highlights the fact that the tau function is really the section of a~line bundle.}

\Keywords{Malgrange form; Fredholm determinants; tau function}

\Classification{35Q15; 47A53; 47A68}

\section{Introduction}
We shall consider the following prototypical matrix Riemann--Hilbert problem (RHP) on the unit circle~$\Sigma$ (or any smooth closed simple contour):
\begin{gather}
\G_+(z;\t) = \G_-(z;\t) M(z;\t),\qquad \forall\, z\in \Sigma, \qquad \G(\infty)=\1.\label{RHP1}
\end{gather}
Here $\t$ stands for a vector of parameters which we refer to as ``deformation parameters''. The assumptions are the following;
\begin{enumerate}\itemsep=0pt
\item The matrix $M(z;\t)\in {\rm GL}_n(\C)$ is jointly analytic for $z$ in a f\/ixed tubular neighbourhood~$N(\Sigma)$ of~$\Sigma$ and~$\t$ in an open connected domain $\mathcal S$, which we refer to as the ``deformation space''.
\item The index of $\det M(z;\t)$ around $\Sigma$ vanishes for all $\t\in \mathcal S$.
\item The partial indices are generically zero, i.e., the RHP~\eqref{RHP1} generically admits solution.
\end{enumerate}
Let us remind the reader of some facts that can be extracted from \cite{ClanceyGohberg}
\begin{enumerate}\itemsep=0pt
\item [--]There exists a matrix function $Y_-(z)$ analytic and analytically invertible in $\operatorname{Ext}(\Sigma) \cup N(\Sigma)$ (and uniformly bounded) and similarly a matrix function $Y_+(z)$ analytic and analytically invertible in $\operatorname{Int}(\Sigma) \cup N(\Sigma)$ and $n$ integers $k_1,\dots, k_n$ (called {\em partial indices}) such that
\begin{gather*}
D(z) Y_+(z) = Y_-(z) M(z), \quad z\in \Sigma , \qquad D = \operatorname{diag}\big( z^{k_1}, \dots, z^{k_n}\big) .
\end{gather*}
\item [--] The RHP \eqref{RHP1} is solvable if and only if all partial indices vanish, $k_j=0$, $\forall\, j=1,\dots, n$.
\end{enumerate}
 Note that since $\operatorname{ind}_\Sigma \det M=\sum\limits_{j=1}^n k_j$, the condition (2) in our assumptions is necessary for the solvability of~\eqref{RHP1}.

We shall denote by $\D_\pm$ the interior ($+$) and the exterior ($-$) regions separated by~$\Sigma$. Def\/i\-ne~$\H_+$ to be the space of functions that are in $L^2(\Sigma, |\d z|)$ and extend to analytic functions in the interior. We will use the notation $\vec \H_+ = \H_+\otimes \C^r$ (i.e., vector-valued such functions). The vectors will be thought of as row-vectors. We also introduce the Cauchy projectors \smash{$C_\pm\colon L^2(\Sigma,|\d z|) \to \H_{\pm}$}:
\begin{gather*}
C_\pm[f](z) = \oint_{\Sigma } \frac { f(w)\d w}{(w-z)2i\pi},\qquad z \in \D_\pm.
\end{gather*}

It is well known \cite{Malgrange:IsoDef1} that the RHP \eqref{RHP1} is solvable if and only if the Toeplitz operator\footnote{Due to our choices of symbols, the matrix symbol of the relevant Toeplitz operator is~$M^{-1}$. We apologize for the inconvenience.}
\begin{gather*}
T_{S}\colon \ \vec\H_+\to \vec \H_+,\\
T_{S} [\vec f] = C_+[\vec f S] ,\qquad S(z;\t) := M^{-1}(z;\t)
\end{gather*}
is invertible, in which case the inverse is given by
\begin{gather*}
T_{S}^{-1}[\vec f] = C_+ \big[\vec f \G_-^{-1}\big]\G_+.
\end{gather*}
 Moreover the operator is Fredholm and
\begin{gather*}
\dim \ker (T_S) - \dim \operatorname{coker} (T_S) = \operatorname{ind}_\Sigma \det M =0.
\end{gather*}

There is no reasonable way, however, to def\/ine a ``determinant'' of $T_S$ as it stands. Such a~function of~$\t$ would desirably have the property that the RHP \eqref{RHP1} is not solvable if and only if this putative determinant is zero.

While this is notoriously impossible in this naive form, we now propose a proxy for the notion of determinant, in terms of a simple Fredholm determinant.

{\bf The Malgrange one-form.}
As Malgrange explains \cite{Malgrange:Deformations}
one can def\/ine a central extension on the loop group $\mathcal G:= \{M\colon \Sigma \to {\rm GL}_n(\C)\colon \operatorname{ind}_{\Sigma} \det M=0\}$ given by $ \wh {\mathcal G} =\{ (M,u)\in \mathcal G \times \C^\times\}$ with the group law\footnote{We are being a bit cavalier in this description; we invite the reader to read pp.~1373--1374 in~\cite{Malgrange:Deformations}.}
\begin{gather*}
(M, u) \cdot (\wt M, \wt u) = \big (M \wt M, u \wt u c\big(M,\wt M\big)\big) ,\qquad c(M, \wt M):= \det_{\H_+} \big(T_{M^{-1}} T_{\wt M^{-1}} T^{-1}_{(M \wt M)^{-1}}\big).
\end{gather*}
The operator in the determinant is of the form $\operatorname{Id}_{\H_+} + $ (trace class) and hence the Fredholm determinant is well def\/ined. This group law is only valid for pairs~$M$,~$\wt M$ for which the inverse of~$T^{-1}_{(M \wt M)^{-1}}$ exists. The left-invariant Maurer--Cartan form of this central extension is then given by $\big(S^{-1}\delta S, \frac {\d u}u +\wh \omega_M\big)$, where
\begin{gather*}
\wh \omega_{M} := \operatorname{Tr}_{\H_+} \big( T_S^{-1} \circ T_{\delta S} - T_{S^{-1}\delta S} \big),\qquad S:= M^{-1},
\end{gather*}
and this can be written as the following integral
\begin{gather}\label{Mal1}
\wh \omega_M = \oint_{\Sigma} \operatorname{Tr} \big(\G_+^{-1} \G_+' M^{-1} \delta M\big) \ddz.
\end{gather}
Here, and below, $\delta$ denotes the exterior total dif\/ferentiation in the deformation space $\mathcal S$:
\begin{gather*}
\delta = \sum \delta t_j \frac {\pa }{\pa t_j}.
\end{gather*}

The Malgrange form is a {\em logarithmic} form in the sense that it has only simple poles on a co-dimension~$1$ analytic submanifold of the deformation space~$\mathcal S$ and with positive integer Poincar\'e residue along it; this manifold is precisely the exceptional ``divisor'' $(\Theta)\subset \mathcal S$ (the {\em Malgrange divisor}) where the RHP~\eqref{RHP1} becomes non solvable, i.e., where some partial indices of the Birkhof\/f factorization become non-zero.

Closely related to \eqref{Mal1} is the following one-form, which we still name after Malgrange:
\begin{gather}\label{Mal2}
\omega_M := \oint_{\Sigma} \operatorname{Tr} \big(\G_-^{-1} \G_-' \delta M M^{-1}\big) \ddz.
\end{gather}
It is also a logarithmic form with the same pole-divisor; indeed one verif\/ies that
\begin{gather*}
\wh \omega_M - \omega_M = \oint_\Sigma \operatorname{Tr} \big(M' M^{-1} \delta M M^{-1}\big) \ddz,
\end{gather*}
which is an analytic form of the deformation parameters $\t\in \mathcal S$. In \cite{BertolaIsoTau} the one-form~\eqref{Mal2} was posited as an object of interest for general Riemann--Hilbert problems (not necessarily on closed contours) and its exterior derivative computed (with an important correction in~\cite{BertolaCorrection}, which is however irrelevant in the present context). It was computed (but the computation can be traced back to Malgrange himself in this case) that
\begin{gather}\label{deltaomega}
\delta \omega_M = \frac 1 2 \oint_{\Sigma} \operatorname{Tr} \left(\Xi(z) \wedge \frac {\d}{\d z} \Xi(z) \right)\ddz, \qquad \Xi(z;\t):= \delta M(z;\t) M(z;\t)^{-1}.
\end{gather}
It appears from \eqref{deltaomega} that this two form $\delta \omega_M$ is not only closed, but also smooth on the whole of~$\mathcal S$, including the Malgrange-divisor. As such, it def\/ines a line-bundle $\mathcal L$ over $\mathcal S$ by the usual construction: one covers $\mathcal S$ by appropriate open sets $U_\alpha$ where $\delta \omega_M =\delta \theta_\alpha$; on the overlap $U_\alpha \cap U_\beta$ the form $\theta_{\alpha }-\theta_{\beta}$ is also exact and one def\/ines then the transition functions by $g_{\alpha\beta}(\t) = \exp \big(\int \theta_{\alpha}-\theta_\beta\big)$. Then a section of this line bundle is provided by the collection of functions $\tau_\alpha\colon U_\alpha \to \C$ such that
\begin{gather*}
\tau_{\alpha}(\t) =\exp\left[\int ( \omega_M - \theta_\alpha )\right].
\end{gather*}
Since $\omega_M$ is a logarithmic form and each $\theta_\alpha$ is analytic in the respective~$U_\alpha$, the func\-tions~$\tau_\alpha(\t)$ have zero of f\/inite order precisely on the Malgrange divisor $(\Theta)\subset \mathcal S$ (under appropriate transversality assumptions, the order of the zero is the dimension of $\operatorname{Ker} T_S$).

Our goal is to provide an explicit construction of the $\tau_\alpha$'s in terms of Fredholm determinants of simple operators of the Its--Izergin--Korepin--Slavnov ``integrable'' type~\cite{ItsIzerginKorepinSlavnov}. Their def\/inition is recalled in due time.

\section{Construction of the Fredholm determinants}

The construction carried out below is not unique, and also only local in the deformation space~$\mathcal S$; this is however not only not a problem, but rather an interesting feature, as we will illustrate in the case of ${\rm SL}_2(\C)$. The non-uniqueness is precisely a consequence of the fact that we are trying to compute a section of the aforementioned line bundle.

{\bf Preparatory step.} The assumption that $M(z;\t)\in {\rm GL}_n$ can be replaced without loss of generality with $M(z;\t)\in {\rm SL}_n$; this is so because of the assumption on the index of $\det M$. Indeed we can solve the scalar problem $y_+ (z) =y_-(z) \det M(z)$, $y(\infty)=1$ and then def\/ine a~new RHP where $\wt \G_\pm (z) := \G_\pm \operatorname{diag}\big(y_\pm^{-1} , 1,\dots\big)$ and hence the new matrix jump for $\wt \G$ is $\wt M (z)= \operatorname{diag}(y_-(z) , 1,\dots) M(z) \operatorname{diag}\big(y_+^{-1}(z) , 1,\dots\big) $ with $\det \wt M \equiv 1$. For this reason, from here on we assume $M\in {\rm SL}_n(\C)$.

We def\/ine an {\it elementary} matrix (for our purposes) to be a matrix of the form $\1 + cE_{jk}$, with $j\neq k$, where $E_{jk}$ denotes the $(j,k)$-unit matrix.

\begin{Lemma}\label{lemmafactor}
Any matrix $M\in {\rm SL}_n(\C)$ can be written as a product of elementary matrices. The entries of the factorization are rational in the entries of~$M$ with denominators that are monomials in a suitable set of $n-1$ nested minors of~$M$.
\end{Lemma}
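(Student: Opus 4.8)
The plan is to prove this by a variant of Gaussian elimination (Bruhat-style decomposition) carried out entirely over the field of rational functions in the entries of $M$, but controlled carefully so that the only denominators that appear are the nested minors. The key point is that elementary matrices $\1 + cE_{jk}$ with $j\neq k$ are precisely the transvections that generate the elementary subgroup, and over a field ${\rm SL}_n = {\rm E}_n$, so existence of \emph{some} factorization is standard; the content of the lemma is the precise control on the denominators.

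First I would set up the reduction inductively on $n$. Write the $n\times n$ matrix $M$ in block form isolating its first column. The leading principal minors $\Delta_1 = M_{11}$, $\Delta_2 = \det\!\big(M_{\{1,2\},\{1,2\}}\big)$, up through $\Delta_{n-1}$ furnish the nested sequence of minors in the statement (with $\Delta_n = \det M = 1$). Assuming for the moment that these leading minors are nonzero, I would use left and right multiplications by elementary matrices to clear the off-diagonal entries of the first column and first row, reducing $M$ to a block-diagonal form $\operatorname{diag}(M_{11}, M')$ where $M'\in {\rm GL}_{n-1}$, and track that each clearing step divides only by the pivots $\Delta_1,\dots,\Delta_{n-1}$ produced so far. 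Crucially, clearing a diagonal pivot to $1$ and absorbing the scalar requires care: a diagonal scaling $\operatorname{diag}(\lambda,\lambda^{-1})$ is itself a product of elementary matrices via the identity
\begin{gather*}
\begin{pmatrix} \lambda & 0 \\ 0 & \lambda^{-1}\end{pmatrix}
= \begin{pmatrix} 1 & \lambda \\ 0 & 1\end{pmatrix}
\begin{pmatrix} 1 & 0 \\ -\lambda^{-1} & 1\end{pmatrix}
\begin{pmatrix} 1 & \lambda \\ 0 & 1\end{pmatrix}
\begin{pmatrix} 0 & -1 \\ 1 & 0\end{pmatrix},
\end{gather*}
and the Weyl element on the right is likewise elementary up to such scalings; these introduce inverses of the pivots but no new denominators. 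At each stage the new $(n-1)\times(n-1)$ block $M'$ has entries that are rational in the original entries with the pivot $\Delta_{k}$ in the denominator, and by the theory of Schur complements its leading minors are exactly ratios of consecutive nested minors $\Delta_{k+j}/\Delta_k$, so the induction hypothesis applies cleanly and the denominators remain monomials in the $\Delta_k$'s.

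The main obstacle I anticipate is the genericity assumption that all the leading nested minors are nonzero: a general $M\in {\rm SL}_n$ need not have this property, and a naive elimination would require a row/column permutation (a Weyl-group element) to find a nonzero pivot, which a priori changes which minors appear in the denominators. I would handle this by arguing that one may always choose the nested sequence of minors \emph{adapted} to $M$ — that is, select at each step a nonzero $k\times k$ minor extending the previously chosen one, which is possible precisely because $M$ is invertible, so no $k\times k$ block-row can be identically singular through all extensions. The rationality and the monomial-denominator claim are then statements about this adapted choice; the permutations needed to bring a nonzero minor into leading position are themselves products of elementary matrices (again via the Weyl-element identity above), so they do not spoil the elementary factorization, and they only relabel which $n-1$ minors are used. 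I would close by noting that the final diagonal factor is $\operatorname{diag}(\Delta_1, \Delta_2/\Delta_1,\dots,1)$ with product $\det M = 1$, which is itself elementary by the scaling identity, completing the factorization with the asserted denominator structure.
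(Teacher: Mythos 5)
Your proposal is correct and follows essentially the same route as the paper's proof: a Gauss/LDU-type decomposition with permutations to guarantee nonvanishing nested minors, elementary factorization of the triangular factors, reduction of the determinant-one diagonal to embedded $2\times 2$ scalings $\operatorname{diag}(\lambda,\lambda^{-1})$, and factorization of the Weyl/permutation elements into transvections. The differences are cosmetic: you derive the decomposition inductively via Schur complements (recovering the minor-ratio bookkeeping the paper cites from Gantmacher) and you use a Weyl-twisted variant of the paper's ``LULU'' identity for $\operatorname{diag}(\lambda,\lambda^{-1})$.
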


\begin{proof}
We recall that given any matrix $M\in {\rm SL}_n$, there is a permutation $\Pi$ of the columns such that the principal minors (the determinants of the top left square submatrices) do not vanish, and hence we can write it as
\begin{gather*}
M = L D U \Pi = \wh M\Pi,
\end{gather*}
where $L$, $U$ are lower/upper triangular matrices with unit on the diagonal and $D = \operatorname{diag}(x_1,\dots$, $x_n)$ is a diagonal matrix (see for example \cite[Vol.~1, Chapter~II]{Gantmacher}). Denote $q_\ell = \det \big[\wh M_{j,k}\big]_{j,k\leq \ell}$ the principal minors; these are the nested minors of the original matrix $M$ alluded to in the statement. The matrices~$L$,~$U$ are rational in the entries of~$M$ and with denominators that are monomials in the~$q_\ell$'s.

Now, both $L$, $U$ can clearly be written as products of elementary matrices whose coef\/f\/icients are polynomials in the entries of~$L$,~$U$ (respectively) and so it remains to show that we can write~$D$ as product of elementary matrices.

To this end we observe the 'LULU' identity (there is a similar `ULUL' identity)
\begin{gather*}
\left[
\begin{matrix}
x & 0\\
0 &\frac 1 x
\end{matrix}
\right] =
\left[
\begin{matrix}
1 & 0\\
\frac {1-x}x & 1
\end{matrix}
\right]
\left[
\begin{matrix}
1 & 1\\
0 & 1
\end{matrix}
\right]
\left[
\begin{matrix}
1 & 0\\
x-1 & 1
\end{matrix}
\right]
\left[
\begin{matrix}
1 & -\frac 1 x \\
0 & 1
\end{matrix}
\right].
\end{gather*}
As $D\in {\rm SL}_n$, we can represent it in terms of product of embedded ${\rm SL}_2$ matrices using the root decomposition of ${\rm SL}_n$:
\begin{gather*}
D = \operatorname{diag}\left(x_1,\frac 1 x_1,1,\dots,1\right)
\operatorname{diag}\left(1,x_2 x_1, \frac 1{x_1 x_2},1,\dots,1\right)\\
\hphantom{D =}{}\times
\operatorname{diag}\left(1,1, x_1x_2x_3, \frac 1{x_1 x_2x_3},1,\dots,1\right) \cdots,
\end{gather*}
and then embed the LULU identity for each factor.

Finally, also permutation matrices can be written as product of elementary matrices embedding appropriately the simple identity;
\begin{gather*}
\left[
\begin{matrix}
0 & -1 \\ 1 & 0
\end{matrix}
\right]
=
\left[
\begin{matrix}
1 & 0 \\ 1 & 1
\end{matrix}
\right]
\left[
\begin{matrix}
1 & -1 \\ 0 & 1
\end{matrix}
\right]
\left[
\begin{matrix}
1 &0 \\ 1 & 1
\end{matrix}
\right].
\end{gather*}
This concludes the proof.
\end{proof}

Let now $M(z)$ be an ${\rm SL}_n$ matrix valued function, analytic in a tubular neighbourhood $N(\Sigma)$ of~$\Sigma$, and let~$q_\ell(z)$, $\ell=1,\dots, n-1$ be the nested minors alluded to in the Lemma so that they are not identically zero. Since the entries are analytic in~$N(\Sigma)$ we can slightly deform the contour~$\Sigma$ to a contour~$\wt \Sigma$ that avoids all zeroes of every principal minor~$q_\ell(z)$. The resulting RHP is ``equivalent'' to the original in the sense that the solvability of one implies the solvability of the other. Note also that this deformation can be done in a piecewise constant way locally with respect to $\t\in \mathcal S$. Thus we have
\begin{gather*}
M(z) = F_1(z) \cdots F_R(z),\qquad F_\nu(z) = \1 + a_\nu (z) E_{j_\nu,k_\nu} ,\qquad \nu = 1,\dots, R.
\end{gather*}

\begin{figure}[t]\centering

\begin{minipage}{0.44\textwidth}\baselineskip 16pt plus 1pt minus 1pt
\resizebox{0.9\textwidth}{!}{
\begin{tikzpicture}[scale=2]

\fill [line width=0, fill=gray!30!white, fill opacity = 1] (0,0) circle[radius=2.2];
\fill [line width=0, fill=gray!30!red, fill opacity = 0.2] (0,0) circle[radius=2];
\fill [line width=0, fill=gray!30!blue, fill opacity = 0.2] (0,0) circle[radius=1.7];
\fill [line width=0, fill=gray!30!green, fill opacity = 0.2] (0,0) circle[radius=1.4];
\fill [line width=0, fill=gray!30!white, fill opacity = 1] (0,0) circle[radius=1.1];
\fill [line width=0, fill=white] (0,0) circle[radius=1.0];
\draw [dotted] circle[radius=2.2];
\draw [dotted] circle[radius=1.0];

\draw [ postaction={decorate,decoration={markings,mark=at position 0.2 with {\arrow[line width=1.5pt]{>}}}}]
(2,0) arc [radius=2, start angle = 0, end angle =360] node [pos=0.23,sloped] {$\Sigma = \Sigma_1$}
node [pos=0.1,sloped, below] {$\D_1$};

\draw [ postaction={decorate,decoration={markings,mark=at position 0.3 with {\arrow[line width=1.5pt]{>}}}}]
(1.7,0) arc [radius=1.7, start angle = 0, end angle =360] node [pos=0.25,sloped] {$\Sigma_2$}
node [pos=0.15,sloped, below] {$\D_2$};

\draw [ postaction={decorate,decoration={markings,mark=at position 0.4 with {\arrow[line width=1.5pt]{>}}}}]
(1.4,0) arc [radius=1.4, start angle = 0, end angle =360] node [pos=0.3,sloped] {$\Sigma_3$}
node [pos=0.2,sloped, below] {$\D_{R-1}$};

\draw [ postaction={decorate,decoration={markings,mark=at position 0.5 with {\arrow[line width=1.5pt]{>}}}}]
(1.1,0) arc [radius=1.1, start angle = 0, end angle =360] node [pos=0.35,sloped] {$\Sigma_R$};
\node at (0.3,0) {$\D_R=\D_+$};
\node at (2,1.6) {$\D_0=\D_-$};
\end{tikzpicture}}\end{minipage}
\caption{An illustration of the splitting of the jump matrix into elementary jumps. The matrix $M(z;\t)$ is analytic in the shaded regions.}\label{split}
\end{figure}
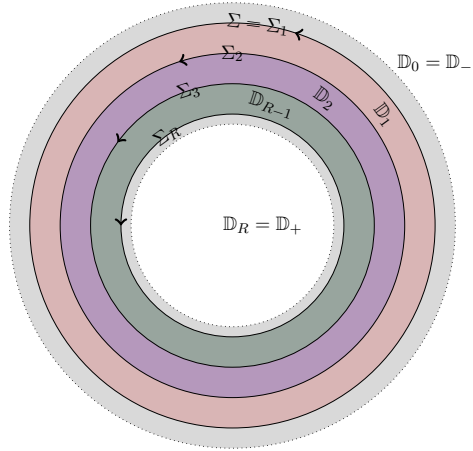

Corresponding to this factorization we can def\/ine an equivalent RHP with jumps on $R$ contours $\Sigma_1,\dots, \Sigma_R$, with $\Sigma_1=\Sigma$ and $\Sigma_{j+1}$ in the interior of $\Sigma_j$ and all of them in the joint domain of analyticity of the scalar functions $a_\nu(z)$ (see Fig.~\ref{split}) which may have poles only at the zeroes of the principal minors $q_\ell(z)$ of $M(z)$. This is accomplished by ``extending'' the matrix $\G_-(z)$ to the annular regions $\D_0 = \D_-$ and $ \D_j = \operatorname{Int}(\Sigma_j)\cap \operatorname{Ext}(\Sigma_{j+1})$ as
\begin{gather}
 \Theta_0 (z) :=\G_-(z), \qquad \forall\, z\in \D_0,\nonumber \\
 \Theta_\nu(z) := \G_-(z) \overrightarrow{\prod_{\ell=1}^{\nu} } F_\ell(z) ,\qquad \forall\, z\in \D_\nu.\label{Thetanus}
\end{gather}
By doing so we obtain the following relations
\begin{gather*}
\Theta_{\nu}(z) = \Theta_{\nu-1}(z) F_{\nu}(z) ,\qquad \forall\, z\in \Sigma_{\nu} ,\qquad \nu=1,\dots, R.
\end{gather*}
The piecewise analytic matrix function $\Theta(z)$ whose restriction to $\D_\nu$ coincides with the matri\-ces~$\Theta_\nu$~\eqref{Thetanus}, satisf\/ies a f\/inal RHP
 \begin{gather}\label{RHPTheta}
\Theta_+(z) = \Theta_-(z) F_\nu (z),\qquad \forall\, z\in \Sigma_\nu,\qquad \Theta(\infty)=\1.
\end{gather}

This type of RHP is of the general type of ``integrable kernels'' and its solvability can be determined by computing the Fredholm determinant of an integral operator of $L^2(\bigsqcup \Sigma_\nu, |\d z|) \simeq \bigoplus_{\nu=1}^R L^2(\Sigma_\nu, |\d z|) $ with kernel (we use the same symbol for the operator and its kernel)
\begin{gather}
K(z,w) = \frac {\vec f^T(z) \vec g(w)}{2i\pi(w-z)} ,\qquad\vec f(z) = \sum_{\nu=1}^R {\bf e}_{j_\nu} \chi_\nu(z) a_\nu(z) ,\qquad
\vec g(z) = \sum_{\nu=1}^R {\bf e}_{k_\nu} \chi_\nu(z),\label{K}
\end{gather}
where $\chi_\nu(z)$ is the projector (indicator function) on the component $L^2(\Sigma_\nu, |\d z|)$. Indeed, as explained in \cite{BertolaCafasso1, ItsHarnad, ItsIzerginKorepinSlavnov}, the Fredholm determinant $\det (\operatorname{Id} - K)$ is zero if and only if the RHP~\eqref{RHPTheta} is non-solvable and moreover the resolvent operator $R = K(\operatorname{Id}-K)^{-1}$ of $K$ has kernel
\begin{gather*}
R(z,w) =\frac{ \vec f^t(z) \Theta^t(z) (\Theta^t)^{-1} (w)\vec g(w)}{z-w}.
\end{gather*}
\begin{Theorem}
The RHP \eqref{RHPTheta} and hence \eqref{RHP1} is solvable if and only if $\tau := \det (\operatorname{Id} - K) \neq 0$.
\end{Theorem}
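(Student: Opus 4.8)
The plan is to invoke the Its--Izergin--Korepin--Slavnov correspondence identifying the integrable operator $K$ of \eqref{K} with the RHP \eqref{RHPTheta}. Since the contours $\Sigma_\nu$ are disjoint circles one has $\chi_\nu\chi_\mu=\delta_{\nu\mu}\chi_\nu$, and a direct computation gives $\vec f(z)\vec g^T(z)=\sum_\nu a_\nu(z)\chi_\nu(z)E_{j_\nu,k_\nu}$, whose restriction to $\Sigma_\nu$ is exactly the elementary dyad $a_\nu E_{j_\nu,k_\nu}$ occurring in $F_\nu$. Because every elementary matrix has $j_\nu\neq k_\nu$, the diagonal contraction vanishes, $\vec f^T(z)\vec g(z)=\sum_\nu a_\nu(z)\chi_\nu(z)\delta_{j_\nu k_\nu}=0$; hence the Cauchy factor in $K(z,w)$ has no singularity as $w\to z$ and $K$ is an analytic---in particular trace class---kernel on the compact contour $\bigsqcup_\nu\Sigma_\nu$. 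Consequently $\tau=\det(\operatorname{Id}-K)$ is well defined, and for a trace class perturbation of the identity $\tau\neq0$ is equivalent to invertibility of $\operatorname{Id}-K$. It therefore suffices to prove that \eqref{RHPTheta} is solvable if and only if $\operatorname{Id}-K$ is invertible.

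For the implication ``$\operatorname{Id}-K$ invertible $\Rightarrow$ \eqref{RHPTheta} solvable'' I would set $\vec F:=(\operatorname{Id}-K)^{-1}\vec f$ (the resolvent applied entrywise) and define
\begin{gather*}
\Theta(z) := \1 + \oint_{\Sigma}\frac{\vec F(w)\,\vec g^T(w)}{2i\pi(w-z)}\,\d w,\qquad z\notin\textstyle\bigsqcup_\nu\Sigma_\nu,
\end{gather*}
which is manifestly analytic off $\bigsqcup_\nu\Sigma_\nu$ with $\Theta(\infty)=\1$. Applying the Plemelj formula on $\Sigma_\nu$ and substituting the resolvent identity $\vec F=\vec f+K\vec F$ into the boundary values, the rank-one term $\vec F\vec g^T$ collapses onto the dyad $a_\nu E_{j_\nu,k_\nu}$ computed above and yields $\Theta_+=\Theta_-F_\nu$; thus $\Theta$ solves \eqref{RHPTheta}.

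For the converse I would take a solution $\Theta$ of \eqref{RHPTheta} and verify that the operator $\operatorname{Id}+R$, with $R$ the kernel recalled just before the statement,
\begin{gather*}
R(z,w)=\frac{\vec f^t(z)\,\Theta^t(z)\,(\Theta^t)^{-1}(w)\,\vec g(w)}{z-w},
\end{gather*}
is a two-sided inverse of $\operatorname{Id}-K$. Expanding $(\operatorname{Id}-K)(\operatorname{Id}+R)$ produces a double integral carrying two Cauchy factors $\frac1{(s-z)(w-s)}$; splitting these via $\frac1{(s-z)(w-s)}=\frac1{w-z}\big(\frac1{s-z}+\frac1{w-s}\big)$ and using the jump relation satisfied by $\Theta^t$ to deform the $s$-integral, the off-diagonal terms telescope, while the would-be diagonal contribution is annihilated by $\vec f^T\vec g=0$. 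Checking the composition in the other order identically then shows $\operatorname{Id}-K$ is invertible, so $\tau\neq0$.

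The step I expect to be the main obstacle is precisely this resolvent computation: with both $z$ and $w$ lying on $\Sigma$, one must track on which side of each $\Sigma_\nu$ the integration variable $s$ is taken and insert the correct boundary values of $\Theta^t$, and one must keep every $\Sigma_\nu$ inside the common domain of analyticity of the $a_\nu$---away from the zeros of the nested minors $q_\ell$, as arranged in the preparatory step---so that all Cauchy transforms are finite and $\Theta,\Theta^{-1}$ remain bounded. Finally, since \eqref{RHPTheta} was obtained from \eqref{RHP1} by extending $\G_-$ through the annuli $\D_\nu$ via \eqref{Thetanus}, the two problems are solvable together, and the criterion $\tau\neq0$ transfers to \eqref{RHP1}.
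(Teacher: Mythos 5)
Your proposal is correct and follows essentially the same route as the paper: the paper establishes this theorem only by invoking the IIKS theory (it cites Its--Izergin--Korepin--Slavnov, Harnad--Its and Bertola--Cafasso just before the statement, with no written proof), and your argument---observing $\vec f^{\,T}\vec g\equiv 0$ so that $K$ is trace class, building $\Theta$ as $\1$ plus the Cauchy transform of $\vec F\vec g^{\,T}$ with $\vec F=(\operatorname{Id}-K)^{-1}\vec f$, and conversely checking that the kernel $R$ built from a solution $\Theta$ inverts $\operatorname{Id}-K$---is precisely the standard argument of those references. One remark: the boundary-value bookkeeping you flag as the main obstacle is milder here than in the general IIKS setting, since the diagonal blocks of $K$ vanish identically (the $\Sigma_\nu$ are disjoint and $j_\nu\neq k_\nu$, and moreover $\Theta_\pm\vec f$ and $\vec g^{\,T}\Theta_\pm^{-1}$ have no jump), so no principal values ever arise.
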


\begin{Proposition}[{see, e.g., \cite[Theorem~2.1]{BertolaCafasso1}}] \label{prop11}
Let $\pa$ be any deformation of the functions $a_\nu(z)$, then
\begin{gather*}
\pa \ln \tau = \sum_{\nu=1}^R \oint_{\Sigma_\nu} \operatorname{Tr} \big(\Theta_-^{-1} \Theta_-'(z) \pa F_\nu F_\nu^{-1} \big) \frac {\d z}{2i\pi}
 = \sum_{\nu=1}^R \oint_{\Sigma_\nu} \big(\Theta_{\nu-1}^{-1} \Theta_{\nu-1}'(z) \big)_{k_\nu, j_\nu} \pa a_\nu(z) \frac {\d z}{2i\pi}.
\end{gather*}
\end{Proposition}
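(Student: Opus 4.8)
The plan is to obtain the first equality from the general variational calculus of Fredholm determinants of Its--Izergin--Korepin--Slavnov kernels, and then to derive the second equality by a short computation with elementary matrices. First I would use the standard identity
\[
\pa\ln\tau=\pa\ln\det(\operatorname{Id}-K)=-\operatorname{Tr}\big((\operatorname{Id}-K)^{-1}\pa K\big),
\]
which is legitimate off the Malgrange divisor (where $(\operatorname{Id}-K)^{-1}$ exists) and because $\pa K$ is trace class, being a finite sum of smooth rank-structured pieces, one per component $\Sigma_\nu$. Since only the $a_\nu$ are deformed we have $\pa\vec g\equiv 0$ and $\pa\vec f(z)=\sum_\nu{\bf e}_{j_\nu}\chi_\nu(z)\pa a_\nu(z)$, so $\pa K(z,w)=\frac{\pa\vec f^{\,T}(z)\vec g(w)}{2i\pi(w-z)}$. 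Writing $(\operatorname{Id}-K)^{-1}=\operatorname{Id}+R$ splits the trace as $\operatorname{Tr}(\pa K)+\operatorname{Tr}(R\,\pa K)$; the first term vanishes because the same-component block of $\pa K$ is identically zero (its numerator is $\pa a_\nu\,{\bf e}_{j_\nu}^T{\bf e}_{k_\nu}=\pa a_\nu\,\delta_{j_\nu k_\nu}=0$ as $j_\nu\neq k_\nu$), so that $\pa\ln\tau=-\operatorname{Tr}(R\,\pa K)$.

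Next I would insert the explicit resolvent kernel recalled just before the statement, $R(z,w)=\frac{\vec f^{\,T}(z)\Theta^T(z)(\Theta^T)^{-1}(w)\vec g(w)}{z-w}$, turning $\operatorname{Tr}(R\,\pa K)$ into the double contour integral
\[
\operatorname{Tr}(R\,\pa K)=\frac{1}{2i\pi}\oint\!\oint\frac{\big(\vec f^{\,T}(z)\Theta^T(z)(\Theta^T)^{-1}(u)\vec g(u)\big)\big(\pa\vec f^{\,T}(u)\vec g(z)\big)}{(z-u)^2}\,|\d u|\,|\d z| .
\]
The hard part is to evaluate the inner ($u$) integral, which carries a double pole at $u=z$. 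The idea is to exploit that both factors have integrable form and that $\Theta$ is analytic and invertible off $\bigsqcup\Sigma_\nu$ with the jump $\Theta_+=\Theta_-F_\nu$ on $\Sigma_\nu$; deforming the $u$-contour to the residue at $u=z$ and using that the pole is of second order produces a single derivative of the $\Theta$-dressing, yielding the logarithmic-derivative combination $\Theta_-^{-1}\Theta_-'$ evaluated on the appropriate ($-$) side. Regrouping the four vectors cyclically into a matrix trace, and using $\pa\vec f\,\vec g^{\,T}\big|_{\Sigma_\nu}=\pa F_\nu\,F_\nu^{-1}$ together with $\Theta_-=\Theta_{\nu-1}$ on $\Sigma_\nu$, this collapses the double integral onto
\[
\pa\ln\tau=\sum_{\nu=1}^R\oint_{\Sigma_\nu}\operatorname{Tr}\big(\Theta_-^{-1}\Theta_-'(z)\,\pa F_\nu\,F_\nu^{-1}\big)\frac{\d z}{2i\pi},
\]
which is the first claimed equality. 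Carefully tracking the $\pm$ boundary values, the orientation of the contours, and the overall sign in this residue reduction is the main obstacle, and it is precisely the content of the cited \cite[Theorem~2.1]{BertolaCafasso1}.

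Finally, the second equality is a one-line matrix computation. On $\Sigma_\nu$ the jump is elementary, $F_\nu=\1+a_\nu E_{j_\nu k_\nu}$ with $j_\nu\neq k_\nu$, so $E_{j_\nu k_\nu}^2=0$, hence $F_\nu^{-1}=\1-a_\nu E_{j_\nu k_\nu}$ and $\pa F_\nu\,F_\nu^{-1}=\pa a_\nu\,E_{j_\nu k_\nu}$. Using the identity $\operatorname{Tr}\big(A\,E_{j_\nu k_\nu}\big)=A_{k_\nu j_\nu}$ and $\Theta_-=\Theta_{\nu-1}$ on $\Sigma_\nu$ gives
\[
\operatorname{Tr}\big(\Theta_-^{-1}\Theta_-'\,\pa F_\nu F_\nu^{-1}\big)=\big(\Theta_{\nu-1}^{-1}\Theta_{\nu-1}'(z)\big)_{k_\nu,j_\nu}\pa a_\nu(z),
\]
which is exactly the integrand of the second expression, completing the proof.
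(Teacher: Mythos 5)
Your proposal is correct and follows essentially the same route as the paper: the paper establishes the first equality simply by invoking \cite[Theorem~2.1]{BertolaCafasso1}, and the Jacobi variational formula plus resolvent/residue computation you sketch (including the vanishing of $\operatorname{Tr}(\pa K)$ because $j_\nu\neq k_\nu$ kills the same-component diagonal blocks, and the identification $\pa \vec f\,\vec g^{\,T}|_{\Sigma_\nu}=\pa F_\nu F_\nu^{-1}$) is precisely the argument behind that cited theorem, so deferring the sign and boundary-value bookkeeping to the same reference is consistent with what the paper itself does. Your second step, using the nilpotency $E_{j_\nu k_\nu}^2=0$ to get $F_\nu^{-1}=\1-a_\nu E_{j_\nu k_\nu}$ and $\pa F_\nu F_\nu^{-1}=\pa a_\nu E_{j_\nu k_\nu}$, together with $\operatorname{Tr}\big(A E_{j_\nu k_\nu}\big)=A_{k_\nu j_\nu}$ and $\Theta_-=\Theta_{\nu-1}$ on $\Sigma_\nu$, is exactly the (unwritten) elementary computation the paper relies on for the second equality.
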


\section[The ${\rm SL}_2$ case]{The $\boldsymbol{{\rm SL}_2}$ case}

We would like to express the Malgrange one-form directly in terms of the $\tau$ function (Fredholm determinant). Rather than obscuring the simple idea with the general case, we consider in detail the ${\rm SL}_2$ case. Let $M(z;\t)$ be analytic in $(z,\t)\in N(\Sigma)\times \mathcal S$ and with values in ${\rm SL}_2(\C)$.
Using the general scheme above, we have the following factorizations
\begin{gather}
 {(1)}\ \ \left[
\begin{matrix}
a & b\\
c &d
\end{matrix}
\right] =
\underbrace{\left[
\begin{matrix}
1 & 0\\
\frac {1+c-a} a & 1
\end{matrix}
\right]}_{F_1(z)}
\underbrace{
\left[
\begin{matrix}
1 & 1\\
0 & 1
\end{matrix}
\right]}_{F_2(z)}
\underbrace{
\left[
\begin{matrix}
1 & 0\\
a-1 & 1
\end{matrix}
\right]}_{F_3(z)}
\underbrace{
\left[
\begin{matrix}
1 & \frac {b-1} a\\
0 & 1
\end{matrix}
\right]}_{F_4(z)} ,\qquad a \not\equiv 0,\nonumber\\
{(2)}\ \ \left[
\begin{matrix}
a & b\\
c &d
\end{matrix}
\right] =
\underbrace{\left[
\begin{matrix}
1 & \frac {1+b-d} d\\
0 & 1
\end{matrix}
\right]}_{F_1(z)}
\underbrace{
\left[
\begin{matrix}
1 & 0\\
1 & 1
\end{matrix}
\right]}_{F_2(z)}
\underbrace{
\left[
\begin{matrix}
1 & d-1\\
0 & 1
\end{matrix}
\right]}_{F_3(z)}
\underbrace{
\left[
\begin{matrix}
1 & 0\\
\frac {c-1}d & 1
\end{matrix}
\right]}_{F_4(z)}, \qquad d \not\equiv 0,\nonumber\\
{(3)}\ \ \left[
\begin{matrix}
a & b\\
-\frac 1 b &0
\end{matrix}
\right] =
\underbrace{\left[
\begin{matrix}
1 & b-ab\\
0 & 1
\end{matrix}
\right]}_{F_1(z)}
\underbrace{
\left[
\begin{matrix}
1 &0 \\
 -\frac 1 b & 1
\end{matrix}
\right]}_{F_2(z)}
\underbrace{
\left[
\begin{matrix}
1 & b\\
0 & 1
\end{matrix}
\right]}_{F_3(z)},\nonumber\\
 {(4)}\ \ \left[
\begin{matrix}
0 & b\\
-\frac 1 b &d
\end{matrix}
\right] =
\underbrace{\left[
\begin{matrix}
1 &\frac {d-1}b\\
0 & 1
\end{matrix}
\right]}_{F_1(z)}
\underbrace{
\left[
\begin{matrix}
1 &0 \\
 b & 1
\end{matrix}
\right]}_{F_2(z)}
\underbrace{
\left[
\begin{matrix}
1 & -\frac 1 b \\
0 & 1
\end{matrix}
\right]}_{F_3(z)},\nonumber \\
{(5)}\ \ \left[
\begin{matrix}
0 & b\\
-\frac 1 b &0
\end{matrix}
\right] =
\underbrace{\left[
\begin{matrix}
1 & 0\\
b & 1
\end{matrix}
\right]}_{F_1(z)}
\underbrace{
\left[
\begin{matrix}
1 & -\frac 1 b \\
0 & 1
\end{matrix}
\right]}_{F_2(z)}
\underbrace{
\left[
\begin{matrix}
1 & 0\\
b & 1
\end{matrix}
\right]}_{F_3(z)}.\label{Factorizations}
\end{gather}

\subsection{Fredholm determinants for dif\/ferent factorizations}
Each of the factorization \eqref{Factorizations} leads to an integrable operator of the form \eqref{K} and hence to a~corresponding Fredholm determinant; we now establish their mutual relationships.

There are two types of questions that we address here
\begin{enumerate}\itemsep=0pt
\item How are the Fredholm determinants associated with the dif\/ferent factorizations $(1,2)$ in~\eqref{Factorizations} related to each other?
\item For a f\/ixed factorization, how does the Fredholm determinant depend on the choice of contour $\Sigma$ (within the analyticity domain $N(\Sigma)$).
\end{enumerate}

Consider the cases \eqref{Factorizations}$_{(\rho)}$, $\rho=1,\dots, 5$.
We compute the logarithmic derivative of the corresponding Fredholm determinant using Proposition~\ref{prop11}
\begin{gather}
\pa \ln \tau_{(\rho)} = \sum_{\nu=1}^R \oint_{\Sigma_\nu} \operatorname{Tr} \big(\Theta_-^{-1} \Theta_-'\pa F_\nu F_\nu^{-1} \big) \frac {\d z}{2i\pi}.\label{42}
\end{gather}
By using the relationship \eqref{Thetanus} between $\Gamma_-$ (extended to an analytic function on $\operatorname{Ext}(\Sigma) \cup N(\Sigma)$), we can re-express it in terms of the Malgrange one-form of the original problem~\eqref{RHP1}; we use the fact that we can deform the contours back to $\Sigma = \Sigma_1$ by Cauchy's theorem. Plugging~\eqref{Thetanus} appropriately in~\eqref{42} and using Leibnitz rule, after a short computation we obtain {\samepage
\begin{gather}
\delta \ln \tau_{_{(\rho)}} = \oint_{\Sigma} \operatorname{Tr} \big(\G_-^{-1} \G_-' \pa M M^{-1} \big) \frac {\d z}{2i\pi} \label{omegarho} \\
\hphantom{\delta \ln \tau_{_{(\rho)}} =}{} +\! \oint_{\Sigma}\! \operatorname{Tr} \big( F_1^{-1} F_1 ' \pa F_2 F_2^{-1} +
 F_{12} ^{-1} F_{12}' \pa F_3 F_3^{-1} + F_{123}^{-1} F_{123}' \pa F_4 F_4^{-1} \big) \frac {\d z}{2i\pi} =:\omega_M + \theta_{(\rho)},\!\nonumber
 \end{gather}
 where $F_{1\dots k} = F_1F_2F_3\cdots F_k$ (if the factorization has only three term, then we set $F_4\equiv \1$).}

{\bf The cases (3,4,5).} The last cases \eqref{Factorizations}$_{{(3,4,5)}}$ lead essentially to a RHP with a triangular jump; it suf\/f\/ices to re-def\/ine~$\Gamma$ by $\Gamma \left[\begin{smallmatrix}
0 & 1\\-1 & 0
\end{smallmatrix}\right] $ for $z\in \operatorname{Int}(\Sigma)$. If the index of $b$ is zero, $\operatorname{ind}_\Sigma b=0$,then the solution can be written explicitly in closed form and it is interesting to compute the Fredholm determinant associated to our factorization of the matrix. We will show
\begin{Proposition}
In the cases~\eqref{Factorizations}$_{{(3,4,5)}}$ and under the additional assumption that $\operatorname{ind}_\Sigma b=0$, the $\tau$ function given by $\det (\operatorname{Id}_{L^2(\cup\Sigma_j)} - K)$ and $K$ as in~\eqref{K}, equals the constant in the strong Szeg\"o formula, given by {\rm \cite{BasorWidom-BO, BorodinOkounkov}}
\begin{gather*}
\tau = \exp \bigg[ \sum_{j>0} j \beta_{-j} \beta_j\bigg] = \det_{\H_+} T_b T_{b^{-1}},
\end{gather*}
 where now the Toeplitz operator is for the scalar symbol $b(z)$ and $\H_+$ is the Hardy space of scalar functions analytic in $\D_+$ and $\beta_j$ are the coefficients of $\ln b(z)$ in the Laurent expansion centered at the origin
 \begin{gather}
\beta(z):= \ln b(z) = \sum_{j\in \Z} \beta_j z^{j}.\label{Laurent}
 \end{gather}
The same applies in the case that the jump is triangular $(b\equiv 0$ and/or $c\equiv 0)$ under the assumption $\operatorname{ind}_\Sigma a=0$ and replacing $b$ with $a$ in the above formulas.
\end{Proposition}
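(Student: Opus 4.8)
The plan is to use the fact that in the cases \eqref{Factorizations}$_{(3,4,5)}$ the Riemann--Hilbert problem \eqref{RHPTheta} is solvable in closed form, to reduce its logarithmic derivative to a few scalar contour integrals by means of Proposition~\ref{prop11}, and then to recognise the outcome as the exterior differential of the Szeg\"o exponent $\sum_{j>0}j\beta_{-j}\beta_j$. First I would carry out the reduction indicated just before the statement: conjugating by the constant matrix $\left[\begin{smallmatrix}0&1\\-1&0\end{smallmatrix}\right]$ on $\operatorname{Int}(\Sigma)$ turns the anti-diagonal jump into one that is diagonal in case \eqref{Factorizations}$_{(5)}$ and triangular in cases \eqref{Factorizations}$_{(3,4)}$, with diagonal part $\operatorname{diag}(b,b^{-1})$ up to the irrelevant constant sign; being constant, both the conjugation and that sign only shift $\beta_0$ and leave every $\beta_{\pm j}$ ($j\ge 1$) untouched, hence do not affect the claimed formula. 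The scalar problem $h_+=b\,h_-$, $h(\infty)=1$, is solved by $\log h_\pm=C_\pm[\log b]$; since $\operatorname{ind}_\Sigma b=0$ the function $\log b$ is single valued on $\Sigma$ with Laurent coefficients $\beta_j$ as in \eqref{Laurent}, and $\log h_-(z)=C_-[\log b]=-\sum_{j>0}\beta_{-j}z^{-j}$. In cases (3),(4) the off-diagonal entry is then recovered by one further additive scalar Cauchy problem on top of the diagonal one, so that $\G_\pm$, and with them all the extended matrices $\Theta_\nu$ of \eqref{Thetanus}, are completely explicit.

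Next I would feed these explicit $\Theta_\nu$ into Proposition~\ref{prop11}, equivalently into the decomposition \eqref{omegarho}, $\delta\ln\tau_{(\rho)}=\omega_M+\theta_{(\rho)}$. Because the diagonal data give $\G_-^{-1}\G_-'=(\log h_-)'\operatorname{diag}(1,-1)$ and the elementary factors $F_\nu$ are explicit, every term becomes a scalar integrand built from $(\log b)'=\sum_j j\beta_j z^{j-1}$, from $(\log h_-)'=\sum_{j>0}j\beta_{-j}z^{-j-1}$, and from the variations $\delta\beta_m$. Using $\oint_\Sigma z^n\ddz=\delta_{n,-1}$, each integral merely extracts a Fourier coefficient: the Malgrange part collapses to $\omega_M=2\sum_{j>0}j\beta_{-j}\,\delta\beta_j$ and the correction to $\theta_{(\rho)}=\sum_{m}m\beta_m\,\delta\beta_{-m}$, so that upon adding them the asymmetric contributions cancel and one is left with the symmetric pairing $\delta\ln\tau=\sum_{j>0}j\big(\beta_j\,\delta\beta_{-j}+\beta_{-j}\,\delta\beta_j\big)=\delta\!\big(\sum_{j>0}j\beta_{-j}\beta_j\big)$. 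The main labour here is the bookkeeping of which of the three (or four) elementary contributions actually survive, together with the verification that the genuinely $z$-dependent, cross-contour terms all reduce to these Fourier pairings after every $\Sigma_\nu$ is deformed back to $\Sigma$ (legitimate since $\G_-$ and the $a_\nu$ are analytic in $N(\Sigma)$).

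Having matched exterior derivatives, $\ln\tau$ and $\sum_{j>0}j\beta_{-j}\beta_j$ can differ only by a constant on the connected space of symbols with $\operatorname{ind}_\Sigma b=0$, and I would pin it down at the reference symbol $b\equiv 1$: there the jump is constant, the RHP has the constant solution $\G_-\equiv\1$, the kernel \eqref{K} has constant data, and a direct evaluation gives $\tau=\det(\operatorname{Id}-K)=1$, matching $\exp[0]=1$. This establishes $\tau=\exp[\sum_{j>0}j\beta_{-j}\beta_j]$, while the remaining equality $\exp[\sum_{j>0}j\beta_{-j}\beta_j]=\det_{\H_+}T_bT_{b^{-1}}$ is the operator form of the strong Szeg\"o theorem (the Widom/Borodin--Okounkov identity \cite{BasorWidom-BO, BorodinOkounkov}) and is invoked directly. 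The purely triangular case ($b\equiv0$ and/or $c\equiv0$) is handled verbatim: the same gauge makes the jump upper or lower triangular with diagonal entry $a$, and repeating the three steps with $a$ in place of $b$ under $\operatorname{ind}_\Sigma a=0$ yields the stated replacement. I expect the fixing of the integration constant — namely verifying $\det(\operatorname{Id}-K)=1$ at the reference symbol, where $K$ is nonzero but couples the nested contours — to be the one genuinely delicate point, the rest being careful but routine once the explicit solution of the first step is available.
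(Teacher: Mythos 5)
Your proposal is correct and follows essentially the same route as the paper's proof: both use the decomposition $\delta\ln\tau_{(3,4,5)}=\omega_M+\theta_{(3,4,5)}$ from \eqref{omegarho}, compute $\theta_{(3,4,5)}=\oint_\Sigma \delta\beta\,\beta'\,\ddz$ and $\omega_M=2\oint_{\Sigma_-}\oint_\Sigma \beta(w)\,\delta\beta(z)\,(w-z)^{-2}\,\ddw\,\ddz$ from the explicit solution, recognize the sum as $\delta\big(\sum_{j>0}j\beta_{-j}\beta_j\big)$, normalize at $b\equiv1$, and invoke Basor--Widom/Borodin--Okounkov for the identification with $\det_{\H_+}T_bT_{b^{-1}}$ -- your only deviation being that you symmetrize in Laurent coefficients, where the paper keeps the double contour integral and the diagonal-residue term plays exactly the role of your asymmetric $\theta$ contribution. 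If anything you are more careful than the printed proof on one point: the paper fixes the multiplicative constant by fiat (``requiring $\tau=1$ for $b\equiv1$''), whereas you correctly flag that the literal statement $\tau=\det(\operatorname{Id}-K)$ requires verifying $\det(\operatorname{Id}-K)=1$ at the reference symbol, where the constant elementary factors make $K$ nonzero.
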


\begin{proof}
From a direct computation of the term $\theta_{(\rho)}$ in \eqref{omegarho}, we f\/ind
\begin{gather*}
 \theta_{_{(3,4,5)}} = \oint_{\Sigma}\left( \delta \beta \frac {\d}{\d z} \beta \right) \ddz ,\qquad \beta(z):= \ln b(z)
\end{gather*}
and the solution of the RHP is explicit,
\begin{gather*}
\Gamma(z) =
\begin{cases}
\ds \left[
\begin{matrix}
 \ds {\rm e}^{B(z)}
 &- \ds {\rm e}^{-B(z) } \oint_\Sigma \frac {a(w)b(w) {\rm e}^{2B_-(w)}}{w-z} \ddw
 \\
 0 & {\rm e}^{-B(z)}
\end{matrix}
\right],
 & z\in \D_-, \\
\ds \left[
\begin{matrix}
 \ds {\rm e}^{B(z)}
 &- \ds {\rm e}^{-B(z) } \oint_\Sigma \frac {a(w)b(w) {\rm e}^{2B_-(w)}}{w-z} \ddw
 \\
 0 & {\rm e}^{-B(z)}
\end{matrix}
\right]\left[
\begin{matrix}
0 & 1\\
-1 &0
\end{matrix}
\right],
 & z\in \D_+,
\end{cases}
\\
B(z) := \oint_{\Sigma} \frac {\beta(w)}{w-z} \frac{\d w}{2i\pi}.
\end{gather*}
Thus we can write explicitly the Malgrange form:
\begin{gather*}
\omega_M = \oint_\Sigma \operatorname{Tr} \big(\G_-^{-1} \G_-' \delta M M^{-1}\big) \ddz
= \oint_\Sigma \operatorname{Tr} \left(\s_3 \oint_{\Sigma} \frac {\beta(w)}{(w-z_-)^2} \ddw\delta \beta(z) \s_3\right) \ddz \\
\hphantom{\omega_M}{} = 2\oint_{\Sigma_-} \oint_{\Sigma} \frac {\beta(w)\delta (\beta(z))}{(w-z)^2} \ddw \ddz.
\end{gather*}
In this integral, $z$ is integrated on a slightly ``larger'' contour $\Sigma_-$. The Fredholm determinant satisf\/ies
\begin{gather*}
\delta \ln \tau_{(3,4,5)} =\omega_M+ \theta_{_{(3,4,5)}}= 2\oint_{\Sigma_-} \oint_{\Sigma} \frac {\beta(w)\delta (\beta(z))}{(w-z)^2} \ddw \ddz+
 \oint_\Sigma\big[ \beta (\delta \beta)' \big] \ddw.
\end{gather*}
We now show that
\begin{gather*}
\delta \ln \tau = \delta \oint_{\Sigma_-} \oint_{\Sigma} \frac {\beta(w) \beta(z)}{(w-z)^2} \ddw \ddz=
 \oint_{\Sigma_-} \oint_{\Sigma} \frac {\delta \beta(w) \beta(z) + \beta(w) \delta \beta(z)}{(w-z)^2} \ddw \ddz.
\end{gather*}
Indeed, the exchange of order of integration of one of the addenda (and relabeling the variables) yields the other term plus the residue on the diagonal,
\begin{gather*}
 \oint_{\Sigma_-} \oint_{\Sigma} \frac {\delta \beta(w) \beta(z) + \beta(w) \delta \beta(z)}{(w-z)^2} \ddw \ddz\\
 \qquad{} =
 \oint_{\Sigma_-} \oint_{\Sigma} \frac {2 \beta(w) \delta \beta(z)}{(w-z)^2} \ddw \ddz + \oint_{\Sigma } {\beta(z) (\delta \beta)' }\ddz.
\end{gather*}
Therefore, in conclusion, we have (we f\/ix the overall constant of $\tau$ by requiring it to be $1$ for $b\equiv 1$)
\begin{gather}
\ln \tau = \oint_{\Sigma_-} \oint_{\Sigma} \frac {\ln b(w) \ln b (z)}{(w-z)^2} \ddw \ddz. \label{33}
\end{gather}
If we write a Laurent expansion of $\ln b$ \eqref{Laurent} the formula \eqref{33} gives the explicit expression
\begin{gather*}
\ln \tau = \sum_{j>0} j \beta_{-j} \beta_j,
\end{gather*}
which is also the formula for the second Szeg\"o limit theorem for the limit of the Toeplitz determinants of the symbol~$b(z)$, and it is known to to be the Fredholm determinant of an opera\-tor~\mbox{\cite{BasorWidom-BO, BorodinOkounkov}}
\begin{gather*}
\tau = \det_{\H_+} T_b T_{b^{-1}},
\end{gather*}
where now the Toeplitz operator is for the scalar symbol $b(z)$ and $\H_+$ is the Hardy space of scalar functions analytic in~$\D_+$.
\end{proof}

\begin{Remark}
The index assumption is only necessary for the case \eqref{Factorizations}$_{{(5)}}$ or \eqref{Factorizations}$_{{(1)}}$ when $b\equiv 0 \equiv c$, because in these situations the RHP separates into two scalar problems. However, the assumption $\operatorname{ind}_\Sigma b=0$ for cases (3,4) or $\operatorname{ind}_\Sigma a =0$ for the triangular case is not necessary as we now show (in the latter form). Consider the RHP
\begin{gather*}
Y_+ = Y_- \left[\begin{matrix}
1 & \mu(z)\\
0& 1
\end{matrix}\right], \quad z\in \Sigma,\qquad
Y(z) = \big(\1 + \mathcal O\big(z^{-1}\big)\big) z^{n\s_3}, \quad z\to\infty.
\end{gather*}
By def\/ining
\begin{gather*}\G(z) =
\begin{cases}
Y(z), & z\in \operatorname{Int}(\Sigma),\\
 Y(z) z^{-n\s_3}, & z\in \operatorname{Ext}(\Sigma),
\end{cases}
\end{gather*}
we are lead to the RHP in standard form
\begin{gather*}
\G_+ = \G_- \left[\begin{matrix}
z^n & z^n\mu(z)\\
0& z^{-n}
\end{matrix}\right], \quad z\in \Sigma,\qquad \G(\infty) =\1,
\end{gather*}
which is case (1) with $c\equiv 0$ (or essentially cases (3,4) up to a multiplication by piecewise constant matrices). Even if $a(z) =z^n$ and thus $\operatorname{ind}_\Sigma a = n$, this problem is still generically solvable in terms of appropriate ``orthogonal polynomials'' $\{p_\ell(z)\}_{\ell\in \N}$ def\/ined by the ``orthogonality'' property
\begin{gather*}
\oint_{\Sigma} p_\ell(z) p_k(z) \mu(z)\d z = h_\ell \delta_{\ell k}.
\end{gather*}
Then the solution of the $Y$-problem is written as
\begin{gather*}
Y(z) = \left[
\begin{matrix}
p_n(z) & \ds \oint_{\Sigma} \frac {p_n(w)\mu(w)\d w}{(w-z)2i\pi}\vspace{1mm}\\
\ds \frac{-2i\pi p_{n-1}(z)}{h_{n-1}} & \ds \oint_{\Sigma} \frac {-p_{n-1}(w)\mu(w)\d w}{(w-z) h_{n-1}}
\end{matrix}
\right],
\end{gather*}
and the solvability depends only on the condition \cite{Chihara}
\begin{gather*}
\det\left[\oint_{\Sigma} z^{\ell+j-2} \mu(z) \d z \right]_{j,\ell=1}^{n} \neq 0.
\end{gather*}
\end{Remark}

{\bf The cases (1,2).}
A straightforward computation using the explicit expression \eqref{Factorizations}$_{(1)}$ yields
\begin{gather}
\pa \ln \tau_{_{(1)}} = \oint_{\Sigma} \Big[\operatorname{Tr} \big(\G_-^{-1} \G_-' \pa M M^{-1} \big) + \pa \ln a (\ln a)' (1 + bc)\nonumber\\
\hphantom{\pa \ln \tau_{_{(1)}} = \oint_{\Sigma} \Big[}{}
 - c \pa b (\ln a)' + c' \pa b - c' b\pa(\ln a)\Big]\ddz\nonumber \\
\hphantom{\pa \ln \tau_{_{(1)}}}{}
 = \oint_{\Sigma} \operatorname{Tr} \big(\G^{-1} \G' \pa M M^{-1} \big) \frac {\d z}{2i\pi}
 + \underbrace{\oint_{\Sigma} \left( \frac {(a)'}{a}(d \pa a - c \pa b )+ { a c'} \pa \left(\frac b a\right)\right)\frac {\d z}{2i\pi}
}_{\theta_{_{(1)}}}.\label{theta}
\end{gather}
Since $\pa \ln \tau$ is a closed dif\/ferential, the exterior derivative of~$\theta_{_{(1)}}$ must be opposite to the one of the f\/irst term, which is given by \eqref{deltaomega}. Let us verify this directly; to this end we compute the exterior derivative of $\theta_{_{(1)}}$. A~straightforward computation yields
\begin{gather*}
\delta \theta_{_{(1)}}= \oint_{\Sigma}
\left(\delta b \wedge (\delta c)' + \frac{ \delta a \wedge (\delta a)' }{a^2} (1 + bc)
+\delta a \wedge \left(\frac {\delta c}a\right)' {b}\right.\\
\left. \hphantom{\delta \theta_{_{(1)}}= \oint_{\Sigma}}{}
-\delta a \wedge \delta b \frac {(c)'}a-\delta b\wedge (\delta a)' \frac {c}{a}-\delta b \wedge \delta c \frac{ a'}{a} \right)\ddz.
\end{gather*}
The exterior derivative of $\omega_M$ is given by~\eqref{deltaomega}, in which we can insert the explicit expression of~$M(z;\t)$; after a somewhat lengthy but straightforward computation we f\/ind that
\begin{gather*}
\delta \omega_M + \delta \theta_{_{(1)}} = -\frac 1 2 \oint_{\Sigma}\frac {\d}{\d z}\left(\frac { c} a \delta a \wedge \delta b
-\frac {b}{a} \delta a \wedge \delta c + \delta b\wedge \delta c \right)\ddz = 0,
\end{gather*}
thus conf\/irming that the dif\/ferential $\delta \ln \tau_{_{(1)}}$ is indeed closed (of course it must be, since the tau function is a Fredholm determinant!). The case ${(2)}$ is analogous with the replacements~$b\leftrightarrow c$ and $a\leftrightarrow d$.

\subsubsection[Determinants for dif\/ferent choices of $\Sigma$]{Determinants for dif\/ferent choices of $\boldsymbol{\Sigma}$}

The factorizations \eqref{Factorizations}$_{(1,2)}$ require that we deform the contour $\Sigma$ so that $a(z)$ (or $d(z)$) does not have any zero on~$\Sigma$. This leads to completely equivalent RHPs of the form~\eqref{RHP1} but not entirely equivalent RHP when expressed in the form~\eqref{RHPTheta}.

Note that \eqref{Factorizations}$_{(3,4,5)}$ do not suf\/fer from this ambiguity, because~$b(z)$ cannot have any zeroes in~$N(\Sigma)$ since we assumed analyticity of the jump matrix $M(z;\t)$.

Consider the factorization~\eqref{Factorizations}$_{(1)}$ (with similar considerations applying to the other factorization); in general $a(z;\t)$ has zeroes in its domain of analyticity, and their positions depend on~$\t$. Therefore it may be necessary, when considering the dependence on~$\t$, to move the contour so that certain zeroes are to the left or to the right of it because a zero may sweep across~$N(\Sigma)$ as we vary $\t\in \mathcal S$.

\begin{figure}[t]\centering
\begin{tikzpicture}[scale=2.4]
\fill [ fill=gray!30!white, fill opacity = 0.41]
plot [smooth cycle, tension =1]
 coordinates{(0:1.1) (20:0.9) (80:0.8) (190:1.1) (260: 0.4) (330:0.9)};
\fill[ fill=white, fill opacity = 1]
plot [smooth cycle, tension =1]
 coordinates{(0:0.8) (20:0.6) (80:0.5) (190:0.8) (290: 0.12) (330:0.2)};

\draw [ postaction={decorate,decoration={markings,mark=at position 0.2 with {\arrow[line width=1.5pt]{>}}}}]
plot [smooth cycle, tension =1]
 coordinates{(0:1) (20:0.8) (80:0.7) (190:1) (260: 0.3) (330:0.8)};
\draw [ postaction={decorate,decoration={markings,mark=at position 0.5 with {\arrow[line width=1.5pt]{>}}}}]
plot [smooth cycle, tension =1]
 coordinates{(0:0.9) (20:0.7) (80:0.6) (190:0.9) (260: 0.2) (330:0.4)}
;
\node at (0.73,0.5) {$\wt \Sigma$};
\node at (0.58,0.2) {$ \Sigma$};
\draw [fill] (0.2,-0.3) circle [radius=0.01];
\draw [fill] (0.2,0.65) circle [radius=0.01];
\draw [fill] (0.7,-0.2) circle [radius=0.01];
\end{tikzpicture}
\caption{The two contours and the zeroes of $a(z;\t)$ within the enclosed region. The shaded area is $N(\Sigma)$ where $M(z;\t)$ is analytic (uniformly w.r.t.~$\t\in \mathcal S$).}\label{ff1}
\end{figure}
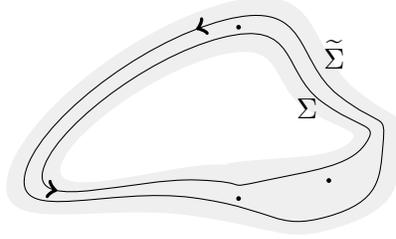

So, let $\Sigma$, $\wt \Sigma$ be two contours in the common domain of analyticity of~$M(z;\t)$ and such that~$a(z;\t)$ has no zeroes on either one and~$\Sigma \subset \operatorname{Int}(\wt \Sigma)$ (see Fig.~\ref{ff1}).

Denote with $\tau_{_{(1)}}$ and $\wt \tau_{_{(1)}}$ the corresponding Fredholm determinants of the operators def\/ined as described; following the same steps as above. Our goal is to show that
\begin{Theorem}
The ratio of the two Fredholm determinants is given by
\begin{gather*}
\wt \tau_{_{(1)}}={ \tau_{_{(1)}}} \prod_{v \in \operatorname{Int}(\wt \Sigma) \cap \operatorname{Ext}(\Sigma):\atop a(v)=0 } (c(v(\t);\t) )^{-{{\rm ord}}_{v}(a)}.
\end{gather*}
Note that the evaluation of $c$ at the zeroes of $a$ cannot vanish because $\det M \equiv 1$.
\end{Theorem}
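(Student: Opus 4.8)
The plan is to differentiate the asserted identity along the deformation and reduce everything to a residue computation. I would start from \eqref{omegarho}, which writes $\pa\ln\tau_{(1)}$ on a given contour as $\omega_M+\theta_{(1)}$, with $\theta_{(1)}$ the explicit one-form in \eqref{theta}. The key observation is that $\omega_M$ is \emph{contour independent}: its integrand $\operatorname{Tr}(\G_-^{-1}\G_-'\,\pa M M^{-1})$ is analytic throughout $N(\Sigma)$, since $\G_-$ is analytically invertible on $\operatorname{Ext}(\Sigma)\cup N(\Sigma)$ and $M$ is analytic and invertible there; hence by Cauchy's theorem replacing $\Sigma$ by $\wt\Sigma$ leaves it unchanged. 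Writing $\theta_{(1)}^{\Sigma}$, $\theta_{(1)}^{\wt\Sigma}$ for the two copies of \eqref{theta},
\[
\pa\ln\wt\tau_{(1)}-\pa\ln\tau_{(1)}=\theta_{(1)}^{\wt\Sigma}-\theta_{(1)}^{\Sigma}=\left(\oint_{\wt\Sigma}-\oint_{\Sigma}\right)\Phi(z;\t)\,\ddz,
\]
where, after expanding $ac'\pa(b/a)$, the integrand is $\Phi=\frac{a'}{a}(d\pa a-c\pa b)-\frac{bc'}{a}\pa a+c'\pa b$. Its only singularities in $N(\Sigma)$ sit at the zeroes of $a$, so the residue theorem gives $\pa\ln(\wt\tau_{(1)}/\tau_{(1)})=\sum_v\res{z=v}\Phi$, the sum over the zeroes $v=v(\t)$ of $a$ in the annulus $\operatorname{Int}(\wt\Sigma)\cap\operatorname{Ext}(\Sigma)$.

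\textbf{The residue at a simple zero.} For a simple zero the analytic term $c'\pa b$ contributes nothing, while $\res{z=v}\frac{a'}{a}(d\pa a-c\pa b)=(d\pa a-c\pa b)|_{z=v}$ and $\res{z=v}(-\frac{bc'}{a}\pa a)=-\frac{b(v)c'(v)}{a'(v)}\pa a(v)$. Here I would invoke unimodularity twice. Evaluating $\det M=ad-bc\equiv 1$ at $a(v)=0$ gives $b(v)c(v)=-1$, whence $c(v)\neq 0$ (the non-vanishing asserted in the statement) and $b(v)=-1/c(v)$. Differentiating $ad-bc\equiv 1$ along $\pa$ (at fixed $z$) and evaluating at $z=v$ gives $(d\,\pa a-c\,\pa b)|_{z=v}=-(\pa c)(v)/c(v)$, where $\pa c$ is the deformation derivative at fixed $z$. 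Combining,
\[
\res{z=v}\Phi=-\frac{(\pa c)(v)}{c(v)}+\frac{c'(v)\,\pa a(v)}{c(v)\,a'(v)}.
\]
On the other hand, since $a(v(\t);\t)\equiv 0$ the zero moves with $\pa v=-\pa a(v)/a'(v)$, so the total deformation derivative of $c$ along the moving zero is $\pa\,c(v(\t);\t)=(\pa c)(v)+c'(v)\pa v=(\pa c)(v)-c'(v)\pa a(v)/a'(v)$. Comparing, $\res{z=v}\Phi=-\pa\ln c(v(\t);\t)$, exactly the logarithmic derivative of the claimed factor.

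\textbf{Summation, confluent zeroes, and the constant.} Summing over the annulus yields $\pa\ln(\wt\tau_{(1)}/\tau_{(1)})=-\pa\sum_v\ln c(v(\t);\t)$, which integrates to the asserted product in the simple-zero case. For a zero of order $m$ I would avoid the order-$m$ residue and argue by continuity: a generic deformation splits it into $m$ simple zeroes, the simple-zero identity holds on the dense open locus of simple configurations, and both $\wt\tau_{(1)}/\tau_{(1)}$ and $\prod_v c(v)^{\operatorname{ord}_v(a)}$ extend continuously in $\t$ (the latter being an analytic, multiplicity-weighted symmetric function of the annulus-zeroes, continuous under collision), which forces the exponent $\operatorname{ord}_v(a)$ in the merged limit. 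Finally the differential identity only determines $\wt\tau_{(1)}/\tau_{(1)}=C\prod_v c(v)^{-\operatorname{ord}_v(a)}$ with $C$ a genuine $\t$-constant; I would pin $C=1$ by degenerating to a configuration whose annulus contains no zero of $a$, where $\Sigma$ and $\wt\Sigma$ deform into one another without crossing a singularity of the IIKS data, so the two Fredholm determinants literally coincide and the product is empty.

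\textbf{Main obstacle.} The residue computation itself is short once $\det M\equiv 1$ is used; I expect the genuinely delicate points to be the justification that the integration constant is $1$ uniformly across the distinct topological configurations of $(\Sigma,\wt\Sigma)$ relative to the zeroes, together with the continuity extension to confluent zeroes — i.e.\ controlling the global, rather than infinitesimal, behaviour of the two determinants.
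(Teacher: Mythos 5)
Your proposal is correct and follows essentially the same route as the paper: differentiate the ratio using \eqref{theta}, kill the Malgrange term $\operatorname{Tr}\big(\G_-^{-1}\G_-'\,\pa M M^{-1}\big)$ by Cauchy's theorem on the annulus between $\Sigma$ and $\wt\Sigma$, evaluate the difference of the $\theta_{(1)}$ terms by residues at the zeroes of $a$, and use the unimodularity identities $bc|_v=-1$ and $\pa(ad)|_v=\pa(bc)|_v$ together with $0=a'\pa v+\pa a|_v$ to recognize each residue as $-\pa\ln c(v(\t);\t)$. The only deviations are minor: the paper handles a zero of order $k$ directly via the expansion $\pa a/a=-k\,\pa v/(z-v)+\mathcal O(1)$ rather than your splitting-into-simple-zeroes continuity argument (which is a legitimate, arguably more careful alternative), and it leaves the normalization of the integration constant implicit where you fix $C=1$ explicitly.
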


\begin{proof} From the formula \eqref{theta} we get
\begin{gather*}
\pa \ln \frac { \wt \tau_{_{(1)}}}{ \tau_{_{(1)}}}
 = \left(\oint_{\wt \Sigma} - \oint_{\Sigma}\right) \operatorname{Tr} \big(\G_-^{-1} \G_-' \pa M M^{-1} \big) \frac {\d z}{2i\pi}\\
\hphantom{\pa \ln \frac { \wt \tau_{_{(1)}}}{ \tau_{_{(1)}}}=}{} +
 \left(\oint_{\wt \Sigma} - \oint_{\Sigma}\right) \left( \frac {(a)'}{a}(d \pa a - c \pa b )+ { ac'} \pa \left(\frac b a \right)\right)\frac {\d z}{2i\pi}.
\end{gather*}
Here $\G_-$ means the analytic extension of the solution $\G$ to the region $\operatorname{Ext}(\Sigma) \cup N(\Sigma)$. Since the integrand of the f\/irst term is holomorphic in the region bounded by $\wt \Sigma $ and $\Sigma$, it yields a~zero contribution by the Cauchy's theorem and we are left only with the second term, which is computable by the residue theorem;
\begin{gather*}
\pa \ln \frac { \wt \tau_{_{(1)}}}{ \tau_{_{(1)}}} =
\sum_{v \in \operatorname{Int}(\wt \Sigma) \cap \operatorname{Ext}(\Sigma):\atop a(v)=0 } \res{z=v}\left(
 \frac {(a)'}{a}(d \pa a - c \pa b ) + {c'}a \pa \left(\frac b a \right)\right).
\end{gather*}
We are assuming that $M(z;\t)$ is analytic and also that $\det M \equiv 1$.
Now note that a zero $v(\t)$ of $a(z;\t)$ in general depends on $\t$; suppose that $a(z;\t) = (z-v(\t))^k (C_k(\t) + \mathcal O(z-v(\t)))$; then
\begin{gather*}
\frac {\pa a(z;\t)}{a(z;\t)} =- \frac {k\pa v(\t)}{z-v(\t)} + \mathcal O(1),
\end{gather*}
and hence the residue evaluation at $z=v(\t)$ yields (we use $0 = a'\pa v + \pa a|_v$ and $\pa (ad)|_v = d \pa a|_v = \pa(bc)|_v$)
\begin{gather*}
\res{z=v}\left( \frac {(a)'}{a}\left( d \pa a - c \pa b \right)+ {ac'} \pa \left(\frac b a \right) \right)= k \left( {-}d a'\big|_v \pa v - c \pa b\big|_v\right) + k bc' \pa v\\
\qquad{} = k \left( d \pa a \big|_v - c \pa b\big|_v\right) + k bc' \pa v = k \left( \pa (bc) \big|_v - c \pa b\big|_v\right) + k bc' \pa v\\
\qquad{} = k b \pa c \big|_v + k bc'\big |_v \pa v \mathop{=}^{bc|_v =-1} - k \pa \ln c(v(\t);\t)
\end{gather*}
(note that $c(v(\t);\t)$ cannot be zero because $\det M\equiv 1$ and $z=v$ is already a zero of $a$).
\end{proof}

\subsubsection{Determinants for dif\/ferent factorizations}

The tau functions (Fredholm determinants) def\/ined thus far should be understood as def\/ining a section of a line bundle over the {\em loop group} space; this is the line bundle associated to the two fom $\delta \omega_M$~\eqref{deltaomega}.

This simply means that on the intersection of the open sets where the factorizations {(1)}, {(2)} in~\eqref{Factorizations} can be made, we have
\begin{gather*}
\delta \ln \frac {\tau_{_{(1)}}}{\tau _{_{(2)}}}= \delta\ln \big( \Upsilon_{(1,2)}\big),
\end{gather*}
where
\begin{gather*}
\delta\ln \big( \Upsilon_{(1,2)}\big) :=\delta \big( \theta_{_{(1)}} - \theta_{_{(2)}} \big).
\end{gather*}
After a short computation we obtain
\begin{gather*}
\delta \ln \big( \Upsilon_{(1,2)}\big) = \oint_{\Sigma} \left(\frac { c' \delta b - b' \delta c}{1 + b c}\right) \ddz=
\oint_\Sigma\left( \delta \ln b \frac {\d}{\d z} \ln (ad) - \delta \ln (ad) \frac {\d}{\d z} \ln b \right)\ddz.
\end{gather*}
Observe the last expression; in principle the functions $b(z)$, $a(z)d(z)$ may have nonzero index around $\Sigma$, but in any case the functions $\delta \ln b$ and $(\ln (ad))'$ are single-valued because the increments of the logarithms are integer multiples of $2i\pi$ and hence locally constant in the space of deformations. To write explicitly the transition function (or rather a representative of the same cocycle class) we need to choose a point $z_0\in \operatorname{Int}(\Sigma)$; we choose $z=0$ without loss of generality. Let $K =\operatorname{ind}_\Sigma b$, $L = \operatorname{ind}_{\Sigma} (ad)$; then we can rewrite the above expression as follows
\begin{gather*}
\delta \ln \big( \Upsilon_{(1,2)}\big)=
\oint_\Sigma \bigg[ \delta \ln \left(\frac b {z^K} \right) \frac {\d}{\d z} \ln \left(\frac {ad} {z^L} \right)-
\delta \ln \left(\frac {ad} {z^L} \right) \frac {\d}{\d z} \ln \left(\frac {b} {z^K} \right)\\
\hphantom{\delta \ln \big( \Upsilon_{(1,2)}\big)=\oint_\Sigma \bigg[}{}
- \delta \ln \left(\frac {ad} {z^L} \right) \frac K z + \delta \ln \left(\frac {b} {z^K} \right) \frac L z\bigg] \ddz,
\end{gather*}
and after integration by parts (which is now possible since all functions involved are single-valued in $N(\Sigma)$)
\begin{gather*}
\delta \ln \left( \Upsilon_{(1,2)}\right) =\delta \oint_\Sigma
\left[ \ln \left(\frac b {z^K} \right) \frac {\d}{\d z} \ln \left(\frac {ad} {z^L} \right)
-\ln \left(\frac {ad} {z^L} \right) \frac K z + \ln \left(\frac {b} {z^K} \right) \frac L z\right] \ddz,
\end{gather*}
so that the transition function admits the explicit expression
\begin{gather*}
\Upsilon_{(1,2)} =\exp \left[\oint_\Sigma \left[ \ln \left(\frac b {z^K} \right) \frac {\d}{\d z} \ln \left(\frac {ad} {z^L} \right)
- \ln \left(\frac {ad} {z^L} \right) \frac K z + \ln \left(\frac {b} {z^K} \right) \frac L z \right] \ddz\right].
\end{gather*}

\section{Conclusion}
We conclude this short note with a few comments.

First of all the choice of $\Sigma$ as a single closed contour is not necessary; we can have a disjoint union of closed contours or an unbounded contour as long as $M(z)$ converges to the identity suf\/f\/iciently fast as $|z|\to \infty$ or near the endpoints. The considerations extend with trivial modif\/ications. Our approach is similar in spirit to the approach used in \cite{GavrylenkoLisovyy} to express the tau function of a general isomonodromic system with Fuchsian singularities in terms of an appropriate Fredholm determinant.

Much less clear to the writer is how to handle the case where $\Sigma$ contains intersections; in this case we should stipulate a local ``no-monodromy'' condition at the intersection points as explained in \cite{BertolaIsoTau, BertolaCorrection}. The obstacle is not the issue of factorization but the fact that the resulting RHP of the IIKS type leads to an operator~$K$~\eqref{K} which is not of trace-class (and not even Hilbert--Schmidt, which would be suf\/f\/icient in order to construct a Hilbert--Carleman determinant).

Nonetheless, the two form $\delta \omega_M$ def\/ines a line bundle as explained and therefore it is possible to compute the ``transition functions'' of the line-bundle; this is precisely what is accomplished (in dif\/ferent setting and in dif\/ferent terminology) in recent works \cite{Its:2016kq, Its:2015qf} and the transition functions can be expressed in terms of explicit expressions, analogously to what we have shown here in this general but simplif\/ied setting.

\subsection*{Acknowledgements}
The author wishes to thank Oleg Lisovyy for asking a very pertinent question on the representation of the Malgrange form in terms of Fredholm determinants. Part of the thinking was done during the author's stay at the ``Centro di Ricerca Matematica Ennio de Giorgi'' at the Scuola Normale Superiore in Pisa, workshop on ``Asymptotic and computational aspects of complex dif\/ferential equations'' organized by G.~Filipuk, D.~Guzzetti and S.~Michalik. The author wishes to thank the organizers and the Institute for providing an opportunity of fruitful exchange.

\pdfbookmark[1]{References}{ref}
\LastPageEnding

\end{document}